\newtheorem{theorem}{Theorem}
\newtheorem{proposition}{Proposition}
\newtheorem{lemma}{Lemma}
\newtheorem{remark}{Remark}
\theoremstyle{remark}
\newcommand{\re}{\text{\rm Re }}
\newcommand{\Sb}{\text{\bf S}}
\newcommand{\Diff}{\text{\rm Diff }}
\newcommand{\Vect}{\text{\rm Vect }}
\newcommand{\const}{\text{\rm const}}
\newcommand{\s}{\vspace{0.3cm}}
\DeclareMathOperator{\p}{\partial}
\begin{document}
\title[Virasoro algebra in L\"owner-Kufarev dynamics]{Virasoro Algebra in L\"owner-Kufarev contour dynamics}
\author[I.~Markina and A.~Vasil'ev]{Irina Markina and Alexander Vasil'ev}

\thanks{The authors were  supported by the grant of the Norwegian Research Council \#177355/V30, by the NordForsk network `Analysis and Applications' grant \#080151, and by the European Science Foundation Research Networking Programme HCAA}
\subjclass[2000]{Primary 81R10, 17B68, 30C35; Secondary 70H06}
\keywords{Virasoro Algebra, Univalent function, L\"owner-Kufarev equation, Hamiltonian, Geodesic}
\address{Department of Mathematics, University of Bergen, Johannes Brunsgate 12, Bergen 5008, Norway}
\email{irina.markina@math.uib.no}
\email{alexander.vasiliev@math.uib.no}

\begin{abstract}
Contour dynamics is a classical subject both in physics and in complex analysis.
We show that the dynamics provided by the L\"owner-Kufarev ODE and PDE possesses a rigid algebraic structure given by the Virasoro algebra.
Namely, the `positive' Virasoro generators span the holomorphic part of the complexified vector bundle over the space of univalent functions, smooth on the boundary. In the covariant formulation they are conserved by the L\"owner-Kufarev evolution.
The `negative' Virasoro generators span the antiholomorphic part.  They contain a conserved term and we give an iterative method to obtain them based on the Poisson structure of the L\"owner-Kufarev evolution. The L\"owner-Kufarev PDE provides a distribution  of the tangent bundle of non-normalized univalent functions, which forms the tangent bundle of normalized ones. It also gives an explicit
correspondence between the latter bundle and the holomorphic eigen space of the complexified Lie algebra of vector fields on the unit circle.
Finally, we give Hamiltonian and Lagrangian formulations of the motion within the coefficient body in the field of an elliptic operator constructed by means of  Virasoro generators. We also discuss relations between CFT and SLE.
\end{abstract}
\maketitle

\section{Introduction}

The challenge of structural understanding of non-equilibrium interface dynamics has become increasingly important in mathematics and physics. Dynamical interfacial properties, such as fluctuations, nucleation and aggregation, mass and charge transport, are often very complex. There exists no single theory or model that can predict all such properties. Many physical processes, as well as complex dynamical systems, iterations and construction of Lie semigroups with respect to the composition operation, lead to the study of growing systems of plane domains. Recently, it has become clear that one-parameter expanding evolution families of simply
connected domains in the complex plane in some special models has been governed by infinite systems
of evolution parameters, conservation laws. This phenomenon reveals a bridge between a non-linear evolution of complex shapes emerged in physical problems, dissipative in most of the cases, and exactly solvable models. A sample problem is the Laplacian growth, in which the harmonic (Richardson's) moments are conserved under the evolution, see e.g., \cite{Mineev, Vas}. The infinite number of evolution parameters reflects the infinite number of degrees of freedom of the system, and clearly suggests to apply field theory methods as a natural tool of study. The Virasoro algebra provides a structural background in most of field theories, and it is not surprising that it appears in soliton-like problems, e.g., KdV or Toda hierarchies, see \cite{Faddeev, Gervais}.

Another group of models, in which the evolution is governed by an infinite number of parameters, can be observed in controllable dynamical systems, where the infinite number of degrees of freedom follows from the infinite number of driving terms. Surprisingly,
the same structural background appears again for this group. We develop this viewpoint in the present paper.

One of the general approaches to the growing contour evolution was provided by L\"owner and Kufarev  \cite{Loewner, Pommerenke2}. The contour evolution is described by a time-dependent conformal parametric map from a canonical domain, the unit disk in most of the cases, onto the domain
bounded by the contour for each fixed instant. In fact, these one-parameter conformal maps satisfy the L\"owner-Kufarev partial differential equation. A characteristic equation to this PDE represents an infinite dimensional controllable system for which
the infinite number of conservation laws is given by the Virasoro generators in their covariant form.  

Recently, Friedrich and
Werner \cite{FriedrichWerner}, and independently Bauer and Bernard \cite{BB}, found relations between SLE (stochastic or Schramm-L\"owner evolution) and the highest weight representation of the Virasoro algebra. 

All  above results encouraged us to conclude that the Virasoro algebra is a common structural basis for these and possibly other types of contour dynamics and we present the development in this direction here. For the first time, a construction, which appeared in the field theory 
plays the algebraic structural  background for the contour evolution in classical complex analysis.

The structure of the paper is as follows. Sections 2 and 3 contain the necessary background on the Virasoro algebra and the L\"owner-Kufarev equations. The main results are contained in Sections 4 and 5.  In Section 4 we construct the Poisson structure on the cotangent bundle of the space of univalent functions smooth on the boundary and the Hamiltonian system generated by the L\"owner-Kufarev equation in ordinary derivatives.
We establish that the holomorphic Virasoro generators in the covariant formulation are conserved under the L\"owner-Kufarev evolution (Theorem 2).
The antiholomorphic generators are proved to contain a conserved term and we give an iterative method to obtain them based on the Poisson structure of the L\"owner-Kufarev evolution. The L\"owner-Kufarev PDE is shown to provide a distribution  of the tangent bundle of non-normalized univalent functions, which forms the tangent bundle of normalized ones. It also gives an explicit
correspondence between the latter bundle and the holomorphic eigen space of the complexified Lie algebra of vector fields on the unit circle.
In Section 5, we give Hamiltonian and Lagrangian formulations of the motion within the coefficient body in the field of an elliptic operator constructed by means of  Virasoro generators.   The solutions with constant velocity coordinates are found. We prove that the norm of the driving function in the L\"owner-Kufarev theory gives the minimal energy of the motion. The short Section 6 we add for completeness. We briefly review the connections between conformal field theory and the Schramm-L\"owner evolution following \cite{BB, FriedrichWerner}.

\s
\noindent
{\bf Acknowledgements.}
We are thankful to H\'el\`ene Airault, Ludwig Faddeev, Paul Malliavin, and Yurii Neretin for many helpful discussions concerning the Virasoro algebra and its representations.

\section{Virasoro Algebra}

The Virasoro algebra $Vir$ plays a prominent role in modern mathematical physics, both in field theories and  solvable models. It appears in physics literature as an algebra obeyed by the stress-energy tensor and associated with the conformal group, the Virasoro-Bott group, of the worldsheet in two dimensions, see e.g., \cite{Polchinski}. It  is a unique central extension of the Lie algebra for the  Lie-Fr\'echet group $\Diff S^1$ of sense-preserving diffeomorphisms of the unit circle $S^1$, and it is an infinite-dimensional real vector space. The extension is characterized by a real parameter $c$, so the Virasoro algebra refers to a class of isomorphic Lie algebras corresponding to different values of $c$. At the same time the Virasoro algebra is intrinsically related to the KdV canonical structure where the Virasoro brackets become the Magri brackets for the Miura transformations of  elements of the phase space of the KdV hierarchy (see, e.g., \cite{Faddeev, Gervais}). 
 
The complex hull $\mathbb CVir$ of the Virasoro algebra can be realized as a central extension by $\mathbb C$ of the Witt algebra, a complex Lie algebra of derivations (or Leibnitz rule) of  the algebra $\mathbb C[z,z^{-1}]$ of complex Laurent polynomials. The Witt algebra  is spanned by the generators $L_n=z^{n+1}\frac{\partial}{\partial z}$ on $\mathbb C\setminus \{0\}$. The operators $L_n$ plus a central element $c$ are called the Virasoro generators. Under any irreducible representation of $\mathbb CVir$, the quantity $c$
is realized as a complex scalar and is called the central charge. The generators satisfy the commutation relations given by  
$$\{L_m,L_n\}_{Vir}=
  (n-m)L_{m+n}+\frac{c}{12}n(n^2-1)\delta_{n,-m},\quad \{L_n,c\}_{Vir}=0,\quad n,m\in \mathbb Z,
$$
where $c\in \mathbb C$ is the central charge. Considering the Virasoro algebra as an operator algebra, the generators $L_n$ become the coefficients in a formal Laurent series for the analytic component of the stress-energy tensor in 2-D field theory.  The attribution `Virasoro algebra' is due to a Virasoro's seminal  paper \cite{Virasoro}. 

Mathematically, the Virasoro algebra appeared for the first time  as a central  extension by the {\it Gelfand-Fuchs cocycle}~\cite{GelfandFuchs} of the Lie algebra $\Vect S^1$ of smooth vector fields $\phi\frac{d}{d\theta}$ on the unit circle   $S^1$ (see \cite{GelfandFuchs}), where the Lie bracket is defined to be the commutator of vector fields
\begin{equation}\label{Lie1}
[\phi_1,\phi_2]={\phi}_1{\phi}'_2-{\phi}_2{\phi}'_1.
\end{equation}
  Each element of the  Lie-Fr\'echet group  $\Diff S^1$ is represented as
$z=e^{i\alpha(\theta)}$ with a monotone increasing $C^{\infty}$
real-valued function $\alpha(\theta)$, such that
$\alpha(\theta+2\pi)=\alpha(\theta)+2\pi$.
The Lie algebra for this group is identified with $\Vect S^1$. The relation of this Lie algebra to $\Diff S^1$ is subtile because  the exponential map is not even locally a homeomorphism.

\subsection{Canonical identification}
The entire necessary background of unitary representations of $\Diff S^1$ is found in the study of Kirillov's
homogeneous K\"ahlerian manifold $\Diff S^1/S^1$.
We  deal with the analytic representation of 
$\Diff S^1/S^1$. Let $\Sb$ stand for the whole class of univalent functions $f$ in the unit disk $U$ normalized by $f(z)=z(1+\sum_{n=1}^{\infty}c_nz^n)$ about the origin and $C^{\infty}$-smooth on the boundary $S^1$ of~$U$.
Given a map $f\in \Sb$ we construct the adjoint univalent
meromorphic map 
\[
g(z)=d_1z+d_0+\frac{d_{-1}}{z}+\dots,
\] 
defined in the exterior $U^*=\{z:\,|z|>1\}$ of $U$, and such that  $\hat{\mathbb C}\setminus\overline{f(U)}=g(U^*)$. Both functions are extendable onto $S^1$. This conformal welding gives the identification of the homogeneous manifold $\Diff S^1/S^1$ with the space $\Sb$:  $\Sb\ni f\leftrightarrow f^{-1}\circ g|_{S^1}\in\Diff S^1/S^1$, or with the  smooth contours $\Gamma=f(S^1)$ that enclose univalent domains $\Omega$ of conformal radius 1 with respect to the origin and such that $\infty\not\in \Omega$, $0\in\Omega$, see \cite{Airault}, \cite{KY1}.  So one can construct complexification
of $\Vect S^1$ and further projection of the holomorphic part to the set $\mathcal M\subset \mathbb C^{\mathbb N}$, which is the projective limit of the coefficient bodies $\mathcal M=\lim_{n\leftarrow \infty}\mathcal M_n$, where
\begin{equation}
\mathcal M_n=\{(c_1,\dots,c_n):\,\,f\in \Sb\}.\label{Mn}
\end{equation}
The holomorphic Virasoro generators can then be realized by the first order differential operators
\[
L_j=\partial_j+\sum\limits_{k=1}^{\infty}(k+1)c_{k}\partial_{j+k},\quad j\in \mathbb N,
\]
in terms of the affine coordinates of $\mathcal M$, acting over the set of holomorphic functions, where $\partial_{k}=\partial/\partial{c_k}$. We explain the details in the next subsection.

\subsection{Complexification}
Let us introduce local coordinates on the manifold \linebreak $\mathcal M=\Diff S^1/S^1$ in the concordance with the local coordinates on the space  $\Sb$ of univalent functions smooth on the boundary. Observe that $\mathcal M$ is a real infinite-dimensional manifold, whereas $\Sb$
is a complex manifold. We are aimed at a complexification of $T\mathcal M$ which admits a holomorphic projection to $T\Sb$, where 
$\Vect_0 S^1=\Vect\, S^1/\const$ is a module over the ring of smooth functions, which is associated with the tangent bundle $T\mathcal M$. 

 Given a real vector space $V$ the complexification $V_{\mathbb C}$ is defined as the tensor product with the complex numbers $V\otimes_{\mathbb R}\mathbb C$. Elements of $V_{\mathbb C}$ are of the form $v\otimes z$.
In addition, the  vector space $V_{\mathbb C}$ is a complex vector space that follows by defining multiplication by complex numbers,
$\alpha(v\otimes z)=v\otimes \alpha z$ for complex $\alpha$ and $z$ and $v\in V$. The space $V$ is naturally embedded into $V\otimes \mathbb C$ by identifying $V$ with $V\otimes 1$.
 Conjugation
is defined by introducing a canonical conjugation map on $V_{\mathbb C}$ as $\overline{v\otimes z}=v\otimes \bar z$.

An almost complex structure $J$ on $V$ can be extended by linearity to the complex structure $J$ on $V_{\mathbb C}$ by $J(v\otimes z)=J(v)\otimes z$. Observe that 
\[
\overline{J(v\otimes z)}=J(\overline{v\otimes z}).
\]

Eigenvalues of extended $J$ are $\pm i$, and there are two eigenspaces $V^{(1,0)}$ and $V^{(0,1)}$ corresponding to them given by projecting $\frac{1}{2}(1\mp iJ)v$. $V_{\mathbb C}$ is decomposed into the direct sum $V_{\mathbb C}=V^{(1,0)}\oplus V^{(0,1)}$, where $V^{(1,0)}=\{v\otimes 1- J(v)\otimes i\big| v\in V \}$ and $V^{(0,1)}=\{v\otimes 1+ J(v)\otimes i\big| v\in V \}$ are the eigen spaces corresponding to $\pm i$.

An almost complex structure on $\Vect_0 S^1$ may be defined as follows (see \cite{Airault}).
We  identify $\Vect_0 S^1$  with the functions with vanishing mean value over~$S^1$. It gives
\[
\phi(\theta)=\sum\limits_{n=1}^{\infty}a_n\cos\,n\theta+b_n\sin\,n\theta.
\]
Let us define an almost complex structure by the operator
\begin{equation}\label{CompStruct}
J(\phi)(\theta)=\sum\limits_{n=1}^{\infty}-a_n\sin\,n\theta+b_n\cos\,n\theta.
\end{equation}
 On $\Vect_0S^1\otimes \mathbb C$, the operator $J$ diagonalizes and we have the identification
\[
\Vect_0S^1\ni \phi\leftrightarrow v:=\frac{1}{2}(\phi-iJ(\phi))=\sum\limits_{n=1}^{\infty}(a_n-ib_n)e^{in\theta}\in (\Vect_0S^1\otimes \mathbb C)^{(1,0)},
\]
and the latter extends into the unit disk as a holomorphic function.

The Kirillov infinitesimal action \cite{Kir2} of $\Vect_0 S^1$ on $\Sb$ is given by
a variational formula due to Schaeffer and Spencer \cite[page 32]{Schaeffer} which lifts the actions from the Lie algebra $\Vect_0
S^1$ onto $\Sb$. Let $f\in\Sb$ and let
$\phi(e^{i\theta}):=\phi(\theta)\in \Vect_0 S^1$ be a $C^{\infty}$ real-valued function in
$\theta\in(0,2\pi]$. The infinitesimal action
$\theta \mapsto \theta+\varepsilon \phi(e^{i\theta})$ yields a variation of the univalent function $f^*(z)= f+\varepsilon\,\delta_{v}f(z)+o(\epsilon)$, where
\begin{equation}
\delta_{v}f(z)=\frac{f^2(z)}{2\pi
}\int\limits_{S^1}\left(\frac{wf'(w)}{f(w)}\right)^2\frac{v(w)dw}{w(f(w)-f(z))} ,\label{var}
\end{equation}
and $\phi\leftrightarrow v$ by the above identification.
 Kirillov and Yuriev \cite{KY1}, \cite{KY2} (see also \cite{Airault})  established
that the variations $\delta_{\phi}f(\zeta)$ are closed with respect to
the commutator~(\ref{Lie1}), and the induced Lie algebra is the same as $\Vect_0
S^1$.  The Schaeffer-Spencer operator is linear.

Treating $T\mathcal M$ as a real vector space, the operator $\delta_{\phi}$ transfers the complex structure $J$ from $\Vect_0 S^1$ to $T\mathcal M$ by $J(\delta_{\phi}):=\delta_{J(\phi)}$. By abuse of notation,  we denote the new complex structure on $T\mathcal M$  by the same character $J$. Then it splits the complexification $T\mathcal M_{\mathbb C}$ into two eigenspaces $T\mathcal M_{\mathbb C}=T\mathcal M^{(1,0)}\oplus T\mathcal M^{(0,1)}$. Therefore,  $\delta_{v}=\delta_{\phi-iJ(\phi)}:=\delta_{\phi}-iJ(\delta_{\phi})\in T\mathcal M^{(1,0)}$. 
Observe that $2z\partial_z=-i\partial_{\theta}$ on the unit circle $z=e^{i\theta}$, and $L_k=z^{k+1}d/d z=-\frac{1}{2}ie^{ik\theta}d/d \theta$ on $S^1$. Let us take the basis
of $\Vect_0 S^1\otimes \mathbb C$ in the form $\nu_k=-ie^{ik\theta}$ in order to keep the index of vector fields the same as for $L_k$. Then, the commutator satisfies the Witt relation $\{\nu_m, \nu_n\}=(n-m)\nu_{n+m}$. Taking  elements $\nu_k=-iw^k$, $|w|=1$ in the integrand of~(\ref{var}) we calculate the residue in (\ref{var}) and obtain so called Kirillov operators
\[
L_j[f](z)=\delta_{\nu_j}f(z)=z^{j+1}f'(z), \quad j=1,2,\dots,
\]
so that these $L_j$ are the  holomorphic coordinates on $T\mathcal M^{(1,0)}$.
In terms of the affine coordinates in $\mathcal M$ we get the Kirillov operators as
\[
L_j=\partial_j+\sum\limits_{k=1}^{\infty}(k+1)c_{k}\partial_{j+k},
\]
where $\partial_k=\partial/\partial c_k$. They satisfy the Witt commutation relation $$\{L_m,L_n\}=(n-m)L_{n+m}.$$
For $k=0$ we obtain the operator $L_0$, which corresponds to the constant vectors from $\Vect \,S^1$, $L_{0}[f](z)=zf'(z)-f(z)$.
The elements of  the Fourier basis $-ie^{-i\theta k}$ with negative indices (corresponding to $T\mathcal M^{(0,1)}$) are extended into $U$ by
$-iz^{-k}$. Substituting them in (\ref{var}) we get very complex formulas for $L_{-k}$, which functionally depend on $L_k$ (see \cite{Airault}, \cite{Kir2}), and  which are dual to $L_k$ with respect to  the action of $J$. The first two operators are calculated as
\begin{eqnarray*}
L_{-1}[f](z)&=&f'(z)-2c_1 f(z)-1,\\
L_{-2}[f](z)&=&\frac{f'(z)}{z}-\frac{1}{f(z)}-3c_1+(c_1^2-4c_2)f(z),
\end{eqnarray*}
see \cite{KY2}.

This procedure gives a nice links between representations of the Virasoro algebra and the theory of univalent functions.
The L\"owner-Kufarev equations proved to be a powerful tool to work with univalent functions (the famous Bieberbach conjecture was proved \cite{Branges} using L\"owner method). In the following section we show how L\"owner-Kufarev equations can be used in a representation of the Virasoro algebra. In particular, we identify $T\mathcal M^{(1,0)}$ with  $T\mathcal M$, equipped with its natural complex structure given by coefficients of univalent functions, by means the L\"owner-Kufarev PDE.

\section{L\"owner-Kufarev Equations}

A time-parameter family $\Omega(t)$ of simply connected hyperbolic univalent domains forms a {\it L\"owner subordination chain}  in the complex plane $\mathbb
C$, for $0\leq t< \tau$ (where $\tau$ may be $\infty$), if
$\Omega(t)\varsubsetneq \Omega(s)$, whenever $t<s$.
We
suppose that the origin is an interior point of the Carath\'eodory kernel of
$\{\Omega(t)\}_{t=0}^{\tau}$.  

A L\"owner subordination chain $\Omega(t)$ is described by a time-dependent family of conformal maps $z=f(\zeta,t)$
from the unit disk $U=\{\zeta:\,|\zeta|<1\}$ onto $\Omega(t)$, normalized by $f(\zeta,t)=a_1(t)\zeta+a_2(t)\zeta^2+\dots$,
$a_1(t)>0$, $\dot{a}_1(t)>0$. After L\"owner's 1923 seminal  paper \cite{Loewner} a fundamental contribution to
the theory of L\"owner chains was made by Pommerenke \cite{Pommerenke1, Pommerenke2} who
described governing evolution equations in partial and ordinary derivatives, known now as
the L\"owner-Kufarev equations due to Kufarev's work \cite{Kufarev}.

One can normalize the growth of 
evolution of a subordination chain by the conformal radius of
$\Omega(t)$ with respect to the origin by  $a_1(t)=e^t$.

 L\"owner \cite{Loewner}  studied a
time-parameter semigroup of conformal one-slit maps of the unit disk $U$ arriving
then at an evolution equation called after him. His main
achievement was an infinitesimal description of the semi-flow of
such maps by the Schwarz kernel that led him to the L\"owner
equation. This crucial result was then generalized in several
ways (see \cite{Pommerenke2} and the references therein).

We say that the  function $p$ is from the Carath\'eodory class if it is analytic in $U$, normalized as $p(\zeta)=1+p_1\zeta+p_2\zeta^2+\dots,\quad
\zeta\in U,$ and such that $\re p(\zeta)>0$ in~$U$.
Pommerenke \cite{Pommerenke1, Pommerenke2} proved that given a subordination
chain of domains $\Omega(t)$ defined for $t\in [0,\tau)$, there exists
a function $p(\zeta,t)$, measurable in $t\in [0,\tau)$ for any fixed $z\in U$, and from the Carath\'eodory class for almost all   $t\in [0,\tau)$, such that 
the conformal mapping $f:U\to \Omega(t)$ solves the equation
\begin{equation}
\frac{\partial f(\zeta,t)}{\partial t}=\zeta\frac{\partial
f(\zeta,t)}{\partial \zeta}p(\zeta,t),\label{LK}
\end{equation}
for $\zeta\in U$ and for almost all $t\in [0,\tau)$.   The
equation (\ref{LK}) is called the L\"owner-Kufarev equation due to
two seminal papers: by L\"owner \cite{Loewner} who considered the case when
\begin{equation}
p(\zeta,t)=\frac{e^{iu(t)}+\zeta}{e^{iu(t)}-\zeta},\label{yadro}
\end{equation}
where $u(t)$ is a continuous function regarding to $t\in [0,\tau)$,
 and by Kufarev \cite{Kufarev} 
who proved differentiability of $f$ in $t$ for all $\zeta$ from the kernel of $\{\Omega(t)\}$ in the case of general $p$ in the Carath\'eodory class.

 Let us consider a reverse process. We are given an
initial domain $\Omega(0)\equiv \Omega_0$ (and therefore, the
initial mapping $f(\zeta,0)\equiv f_0(\zeta)$), and a function
$p(\zeta,t)$ of positive real part normalized by $p(\zeta,t)=1+p_1\zeta+\dots$. Let us solve the equation (\ref{LK})
and ask ourselves, whether the solution $f(\zeta,t)$ defines a subordination
chain of simply connected univalent domains $f(U,t)$. The initial condition
$f(\zeta,0)=f_0(\zeta)$ is not given on the characteristics of the
partial differential equation (\ref{LK}), hence the solution exists
and is unique but not necessarily univalent. Assuming $s$ as a parameter along the characteristics
we have $$ \frac{dt}{ds}=1,\quad \frac{d\zeta}{ds}=-\zeta
p(\zeta,t), \quad \frac{df}{ds}=0,$$ with the initial conditions
$t(0)=0$, $\zeta(0)=z$, $f(\zeta,0)=f_0(\zeta)$, where $z$ is in
$U$.  Obviously, $t=s$. Observe that the domain of $\zeta$ is the entire unit disk. However, the solutions to
the second equation of the characteristic system range within the unit disk but do not fill it. 
Therefore, introducing another letter $w$ (in order to distinguish the function $w(z,t)$  from the variable $\zeta$) we arrive at the Cauchy problem for the  L\"owner-Kufarev
equation in ordinary derivatives
\begin{equation}
\frac{dw}{dt}=-wp(w,t),\label{LKord}
\end{equation}
 for a function $\zeta=w(z,t)$
with the initial condition $w(z,0)=z$. The equation (\ref{LKord}) is a non-trivial  characteristic
equation for (\ref{LK}). Unfortunately, this approach requires the
extension of $f_0(w^{-1}(\zeta,t))$ into the whole $U$ ($w^{-1}$ means the inverse function) because the solution to
(\ref{LK}) is the function $f(\zeta,t)$  given as
$f_0(w^{-1}(\zeta,t))$, where $\zeta=w(z,s)$ is a solution of the
initial value problem for the characteristic equation (\ref{LKord})
that maps $U$ into $U$. Therefore, the solution of the initial
value problem for the equation (\ref{LK}) may be non-univalent.

 Solutions to the
equation (\ref{LKord}) are holomorphic univalent functions
$w(z,t)=e^{-t}z+a_2(t)z^2+\dots$ in the unit disk that map $U$  into
itself. Every function $f$ from the class $\Sb$ can be
represented by the limit
\begin{equation}
f(z)=\lim\limits_{t\to\infty}e^t w(z,t),\label{limit}
\end{equation}
where $w(z,t)$ is a solution to \eqref{LKord}  with some  function $p(z,t)$ of positive real part for almost
all $t\geq 0$ (see \cite[pages 159--163]{Pommerenke2}). Each function
$p(z,t)$ generates a unique function from the class $\Sb$. The
reciprocal statement is not true. In general, a function $f\in \Sb$
can be obtained using different functions $p(\cdot,t)$. 

Now we are ready to formulate the condition of  univalence of the solution to the equation (\ref{LK}), which
can be obtained by combination of known results of \cite{Pommerenke2}.

\begin{theorem}\label{ThProkhVas}{\rm \cite{Pommerenke2, ProkhVas}} Given  a function
$p(\zeta,t)$ of positive real part normalized by $p(\zeta,t)=1+p_1\zeta+\dots$, the solution to   the equation (\ref{LK})
is unique, analytic and univalent with respect to $\zeta$ for almost all $t\geq 0$, if and only if, the initial condition
$f_0(\zeta)$ is taken in the form \eqref{limit}, where the function $w(\zeta,t)$ is the solution to the equation \eqref{LKord}
with the same driving function $p$.
\end{theorem}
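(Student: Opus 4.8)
The plan is to read the solution of (\ref{LK}) off its characteristics and then match the freedom in the initial datum $f_0$ against the univalence requirement. First I would record the characteristic identity. Writing $w(z,s,t)$ for the solution of (\ref{LKord}) that equals $z$ at time $s$ and is evaluated at time $t\ge s$ (so that $w(z,t)=w(z,0,t)$), uniqueness for the ODE gives the semigroup law $w(z,s,t)=w(w(z,s,r),r,t)$ for $s\le r\le t$, and each $w(\cdot,s,t)\colon U\to U$ is a univalent self-map of the disk fixing the origin with $w'(0,s,t)=e^{-(t-s)}$. Since $df/ds=0$ along characteristics, the unique solution of (\ref{LK}) with $f(\zeta,0)=f_0$ is the one determined by
\[
f(w(z,t),t)=f_0(z),\qquad z\in U,
\]
equivalently $f(\zeta,t)=f_0(w^{-1}(\zeta,t))$ on $w(U,t)$, continued analytically to $U$. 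This makes transparent the difficulty already flagged before the statement: $w(\cdot,t)$ maps $U$ strictly inside itself, so the continuation to all of $U$ need not be univalent.

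For the sufficiency direction I would assume $f_0$ has the form (\ref{limit}) for the given $p$ and produce the solution explicitly as a L\"owner chain. Define $f(\zeta,t):=\lim_{T\to\infty}e^{T}w(\zeta,t,T)$; the normalization $w'(0,t,T)=e^{-(T-t)}$ shows this limit (which converges by the representation of Pommerenke) is normalized by $e^{t}\zeta+\dots$, and it is univalent for every $t$ as a locally uniform limit of the univalent maps $e^{T}w(\cdot,t,T)$. The semigroup law gives $e^{T}w(w(z,t),t,T)=e^{T}w(z,0,T)$, and letting $T\to\infty$ yields $f(w(z,t),t)=\lim_{T}e^{T}w(z,0,T)=f_0(z)$, which is exactly the characteristic identity; differentiating it in $t$ and using (\ref{LKord}) recovers (\ref{LK}) on $w(U,t)$, hence on $U$ by analyticity. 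Thus $f$ is the unique analytic univalent solution, proving the "if" part.

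For necessity I would start from a solution $f(\zeta,t)$ univalent for almost all $t$ and first upgrade it to a genuine subordination chain. From the characteristic identity and the semigroup law one gets, for $t_1<t_2$, the relation $f(\zeta,t_1)=f(w(\zeta,t_1,t_2),t_2)$; since $w(\cdot,t_1,t_2)$ is a Schwarz self-map of $U$ fixing $0$, this says $f(U,t_1)\subset f(U,t_2)$, so $\{f(\cdot,t)\}$ is a L\"owner subordination chain driven by $p$. The representation (\ref{limit}) then identifies its initial element: passing to the limit $t\to\infty$ in $f(w(z,t),t)=f_0(z)$ and writing $f(\zeta,t)=e^{t}(\zeta+\tilde a_2(t)\zeta^2+\dots)$, the de Branges bound $|\tilde a_n(t)|\le n$ together with $w(z,t)=O(e^{-t})$ controls the series so that $f(w(z,t),t)\to\lim_{T}e^{T}w(z,T)$, giving $f_0(z)=\lim_{T\to\infty}e^{T}w(z,T)$ with the same $p$. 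That $f_0$ itself is univalent, even if $t=0$ is exceptional, follows from continuity in $t$, since $f_0$ is then a locally uniform limit of the univalent $f(\cdot,t_n)$ as $t_n\to0$.

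The main obstacle I expect is this last limit interchange in the necessity part: one must control the tail $\sum_{n\ge2}a_n(t)w(z,t)^n$ uniformly on compact subsets of $U$, for which the sharp coefficient bound $|a_n(t)|\le n e^{t}$, valid precisely because $f(\cdot,t)$ is univalent, must be balanced against a uniform decay estimate $|w(z,t)|\le C(r)e^{-t}$ for $|z|\le r$. All remaining steps are either the method of characteristics or direct appeals to the cited results of Pommerenke.
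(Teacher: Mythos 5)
The paper gives no proof of this theorem: it is stated as a ``combination of known results'' and cited to \cite{Pommerenke2, ProkhVas}, so there is nothing internal to compare against. Your reconstruction is essentially the standard argument from those sources --- the two-parameter transition functions $w(z,s,t)$, the limit representation $f(\zeta,t)=\lim_{T\to\infty}e^{T}w(\zeta,t,T)$ for sufficiency, and the characteristic identity $f(w(z,t),t)=f_0(z)$ with the tail estimate $e^{t}\sum_{n\ge 2}a_n(t)w^n\to 0$ for necessity --- and it is correct; the only cosmetic remark is that invoking de Branges is overkill, since the Koebe growth theorem (or $|a_n|\le en^2$) already gives $|w(z,t)|\le e^{-t}|z|/(1-|z|)^2$ and geometric decay of the tail.
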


Recently, we started to look at L\"owner-Kufarev equations from the point of view of  motion in the space of univalent functions where Hamiltonian and Lagrangian formalisms play a central role (see, \cite{Vasiliev}). Some connections with the Virasoro algebra were also observed in \cite{MPV, Vasiliev}. The present paper generalizes these attempts and gives their closed form.  The main conclusion is that the L\"owner-Kufarev equations are naturally linked to the holomorphic part of the Virasoro algebra.
Taking holomorphic Virasoro generators $L_n$ as a basis of the tangent space to the  coefficient body for univalent functions at a fixed point, we see that the driving function in the L\"owner-Kufarev theory generates generalized moments for motions within
the space of univalent functions. Its norm represents the energy of this motion. The holomorphic Virasoro generators in their co-tangent form will become  conserved quantities of the L\"owner-Kufarev ODE. The L\"owner-Kufarev PDE becomes a transition formula from the affine basis to Kirillov's basis of the holomorphic part of the complexified tangent space to $\mathcal M$ at any point.  Finally, we propose to study an alternate L\"owner-Kufarev evolution
instead of subordination.

\section{Witt algebra and the classical L\"owner-Kufarev equations}

In the following subsections we reveal the structural role of the Witt algebra as a background of the classical L\"owner-Kufarev contour evolution. As we see further,
the conformal anomaly and the Virasoro algebra appear as a quantum or stochastic effect in SLE.

\subsection{L\"owner-Kufarev ODE}
Let us consider the functions
\[
w(z,t)=e^{-t}z\left(1+\sum\limits_{n=1}^{\infty}c_n(t)z^n\right),
\]
satisfying the L\"owner-Kufarev ODE
\begin{equation}\label{LKord1}
\frac{dw}{dt}=-wp(w,t),
\end{equation}
with the initial condition $w(z,0)=z$, and with the function
$p(z,t)=1+p_1(t)z+\dots$ which is holomorphic in $U$ and measurable with respect to
$t\in [0,\infty)$, such that $\re p>0$ in $U$. The function
$w(z,t)$ is univalent and maps $U$ into $U$. 

\begin{lemma}
 Let the function $w(z,t)$ be a solution to the Cauchy problem for the equation
\eqref{LKord1}
with the initial condition $w(z,0)=z$. If the driving function $p(\cdot,t)$, being from the Carath\'eodory class for almost all $t\geq 0$, is  $C^{\infty}$ smooth in the closure $\hat{U}$ of the unit disk $U$ and summable with respect to $t$, then the boundaries of
the domains $B(t)=w(U,t)\subset U$ are smooth for all $t$.
\end{lemma}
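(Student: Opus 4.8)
The plan is to reduce the assertion to the regularity of the boundary parametrization and then extract that regularity directly from the characteristic ODE \eqref{LKord1}. Since $w(\cdot,t)$ is univalent on $U$ with image $B(t)$, the boundary $\partial B(t)$ is the image of the unit circle, $\partial B(t)=w(S^1,t)$, provided $w(\cdot,t)$ extends to a homeomorphism of $\overline U$ onto $\overline{B(t)}$. Thus it suffices to show that for each fixed $t$ the map $\theta\mapsto w(e^{i\theta},t)$ is a $C^\infty$ immersion of $S^1$, i.e. it is smooth in $\theta$ and its $\theta$-derivative never vanishes; note that smoothness is required only in the spatial (boundary) variable, so the weak time regularity of $p$ plays no role in the conclusion itself. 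First I would observe that, because $w$ maps $U$ into $U$ continuously, it maps $\overline U$ into $\overline U$, so along the boundary flow the argument $w(e^{i\theta},t)$ stays in the closed disk $\hat U$, precisely where $p(\cdot,t)$ is assumed $C^\infty$; hence the right-hand side $-w\,p(w,t)$ of \eqref{LKord1} is a smooth function of the spatial variable, summable in $t$.

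Next I would treat $w(e^{i\theta},t)$ as the solution of \eqref{LKord1} with initial datum $w(e^{i\theta},0)=e^{i\theta}$ and invoke the Carath\'eodory theory of ordinary differential equations whose right-hand side is measurable and summable in $t$ but $C^\infty$ in the space variable. This yields smooth dependence of the solution on the initial condition, so $\theta\mapsto w(e^{i\theta},t)$ is $C^\infty$; the successive $\theta$-derivatives $\partial_\theta^k w$ satisfy linear variational equations with coefficients that are integrable in $t$ (by summability of $p$ and its $\zeta$-derivatives and the compactness of $\hat U$), which guarantees their existence and boundedness.

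The regularity (nonvanishing tangent) is then obtained from the first variational equation. Writing $w_\theta=\partial_\theta w$ and differentiating \eqref{LKord1} in $\theta$ gives
\begin{equation*}
\frac{d w_\theta}{dt}=-\bigl(p(w,t)+w\,p'(w,t)\bigr)\,w_\theta,
\end{equation*}
a linear homogeneous equation for $w_\theta$, whence
\begin{equation*}
w_\theta(t)=w_\theta(0)\,\exp\Bigl(-\int_0^t\bigl(p(w,s)+w\,p'(w,s)\bigr)\,ds\Bigr).
\end{equation*}
Since $w_\theta(0)=i e^{i\theta}\neq 0$ and the exponential factor never vanishes, $w_\theta(t)\neq 0$ for every $t$, so the boundary curve is an immersion. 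Together with the univalence of $w(\cdot,t)$ this shows that the extension to $\overline U$ is a homeomorphism onto $\overline{B(t)}$ and that $\partial B(t)$ is a smooth Jordan curve.

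I expect the main obstacle to be the regularity bookkeeping in the Carath\'eodory setting: justifying that smooth dependence on the initial data survives when $p$ is only measurable and summable in $t$, and verifying that the integral in the exponent — and the coefficients of all higher variational equations — are finite. This rests on the summability of $p(\cdot,t)$ together with uniform bounds on $p$ and its $\zeta$-derivatives over the compact closed disk $\hat U$, which convert the smooth spatial dependence into integrable-in-$t$ coefficients; the nonvanishing of $w_\theta$ is then the clean algebraic heart of the argument once this analytic groundwork is in place.
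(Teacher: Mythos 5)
Your argument is correct and follows essentially the same route as the paper: both proofs rest on the Carath\'eodory-type theorem on smooth dependence on initial data for ODEs with right-hand side measurable and summable in $t$ and smooth in the space variable, and both extract the boundary regularity from the variational equations — your exponential formula for $w_\theta$ is exactly the paper's identity $\log w'=-\int_0^t(p(w,\tau)+w\,p'(w,\tau))\,d\tau$ in exponentiated form, and the paper's higher-order bootstrap via the analogous integral formulas for $w''$, etc.\ is what you package as the higher variational equations. The only substantive addition on your side is making explicit the nonvanishing of the boundary tangent and the injectivity on $S^1$, which the paper leaves implicit in the finiteness of $\log w'$ on the circle.
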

\begin{proof}
Observe that the continuous and differentiable dependence of
 the solution to a differential equation $\dot{x}=F(t,x)$ on the initial condition $x(0)=x_0$ is a classical problem. One can refer, e.g., to \cite{Volpato}
in order to assure that summability of $F(\cdot, x)$ regarding to $t$ for each fixed $x$ and  continuous differentiability ($C^1$ with respect to
$x$ for almost all $t$) imply that the solution $x(t,x_0)$ exists, is unique, and is $C^1$ with respect to $x_0$. In our case, the solution to  \eqref{LKord1}
exists, is unique, analytic in $U$, and moreover, $C^1$ on its boundary $S^1$. Let us differentiate \eqref{LKord1} inside the unit disk $U$ with respect to $z$ and write
\[
\log w' =-\int\limits_{0}^{t}(p(w(z,\tau),\tau)+w(z,\tau)p'(w(z,\tau),\tau))d\tau,
\]
choosing the branch of the logarithm such as $\log w'(0,t)=-t$.
This equality is extendable onto $S^1$ because the right-hand side is, and therefore, $w'$ is $C^1$ and $w$ is $C^2$ on $S^1$. We continue analogously and write the formula
\[
w''=-w'\int\limits_{0}^{t}(2w'(z,\tau)p'(w(z,\tau),\tau)
+w(z,\tau)w'(z,\tau)p''(w(z,\tau),\tau))d\tau,
\]
which guarantees that $w$ is $C^3$ on $S^1$. Finally, we come to the conclusion that $w$ is $C^\infty$ on $S^1$.
\end{proof}

Let 
$f(z,t)$ denote $e^tw(z,t)$. The limit $\lim_{t\to\infty}f(z,t)$ is known
 \cite{Pommerenke2} to be a representation of all univalent functions. 
 
  Let the driving term $p(z,t)$ in the L\"owner-Kufarev ODE be from the Carath\'eodory class for almost all $t\geq 0$,   $C^{\infty}$ smooth in $\hat{U}$, and summable with respect to $t$.
 Then the domains $\Omega(t)=w(U,t)$  have  smooth boundary $\partial
 \Omega(t)$. So the L\"owner equation can be extended onto the
 closed unit disk $\hat U=U\cup S^1$.
 
 Consider the Hamiltonian function given by
 \begin{equation}\label{Ham3}
 H=\int\limits_{z\in S^1}f(z,t)(1-p(e^{-t}f(z,t),t))\bar \psi(z,t)\frac{dz}{iz},
 \end{equation}
on the unit circle $z\in S^1$, where $\psi(z,t)$ is a formal series 
$$
\psi(z,t)=\sum_{n=-k}^{\infty}\psi_nz^{n},
$$
defined about the unit circle $S^1$ for any $k\geq 0$.
The Poisson structure on the symplectic space $(f, \bar\psi)$ is given by the canonical brackets
\[
\{P, Q\}=\frac{\delta P}{\delta f}\frac{\delta Q}{\delta \bar \psi}-\frac{\delta P}{\delta \bar \psi}\frac{\delta Q}{\delta f},
\]
or in  coordinate form (only $\psi_n$ for $n\ge 1$ are independent co-vectors corresponding to the tangent vectors $\partial_n$ with respect to the canonical Hermitean product for analytic functions)
\[
\{p, q\}=\sum_{n=1}^{\infty}\frac{\partial p}{\partial c_n}\frac{\partial q}{\partial \bar \psi_n}-\frac{\partial p}{\partial \bar \psi_n}\frac{\partial q}{\partial c_n}.
\]
 Here
\[
P(t)= \int\limits_{z\in S^1}p(z,t)\frac{dz}{iz},\quad Q(t)= \int\limits_{z\in S^1}q(z,t)\frac{dz}{iz}.
\]
The Hamiltonian system becomes
\begin{equation}\label{sys1}
\frac{d f(z,t)}{dt}=f(1-p(e^{-t}f,t))=\frac{\delta H}{\delta \overline{\psi}}=\{f,H\},
\end{equation}
for the position coordinates and
\begin{equation}\label{sys2}
\frac{d\bar \psi}{dt}=-(1-p(e^{-t}f,t)-e^{-t}fp'(e^{-t}f,t))\bar
\psi=\frac{-\delta H}{\delta f}=\{\overline{\psi},H\},
\end{equation}
for the momenta, where $\frac{\delta}{\delta f}$ and $\frac{\delta}{\delta \overline{\psi}}$ are the variational derivatives. So the phase coordinates $(f,\bar{\psi})$ play the role of the canonical Hamiltonian pair.

The coefficients $c_n$ are the complex local coordinates on $\mathcal M$, so in these coordinates we have
\begin{eqnarray}
\dot{c}_n & = &\frac{d c_n}{dt}=c_n-\frac{e^t}{2\pi i}\int\limits_{S^1}w(z,t)p(w(z,t),t)\frac{dz}{z^{n+2}}, \nonumber\\ 
&=&-\frac{1}{2\pi i}\int\limits_{S^1}\sum\limits_{k=1}^ne^{-kt}(e^tw)^{k+1}p_k\frac{dz}{z^{n+2}},\quad n\geq 1.\nonumber
\end{eqnarray}
Let us fix some $n$ and project the infinite dimensional Hamiltonian system on an $n$-dimensional $\mathcal M_n$.
The dynamical equations for momenta governed by the Hamiltonian function \eqref{Ham3} are
\begin{equation*}
\dot{\bar{\psi}}_j=-\bar{\psi}_j+\frac{1}{2\pi i}\sum\limits_{k=1}^n\bar{\psi}_k\int\limits_{S^1}(p+wp')\frac{dz}{z^{k-j+1}},\quad j= 1,\dots, n-1,\label{psi1}
\end{equation*}
and 
\begin{equation}
\dot{\bar{\psi}}_n=0.
\end{equation}
In particular,
\begin{eqnarray*}
\dot{c}_1 & = & -e^{-t}p_1,\\
\dot{c}_2 & = & -2e^{-t}p_1c_1-e^{-2t}p_2,\\
\dot{c}_3 & = & -e^{-t}p_1(2c_2+c_1^2)-3e^{-2t}p_2c_1-e^{-3t}p_3,\\
\dots& & \dots
\end{eqnarray*}
for $n=3$ we have
\begin{eqnarray*}
\dot{\bar{\psi}}_1 & = & 2e^{-t}p_1\bar{\psi}_2+(2e^{-t}p_1c_1+3e^{-2t}p_2)\bar{\psi}_3,\\
\dot{\bar{\psi}}_2 & = & 2e^{-t}p_1\bar{\psi}_3,\\
\dot{\bar{\psi}}_3 & = & 0.
\end{eqnarray*}

 Let us set the function  $L(z):=f'(z,t)\bar\psi(z,t)$. Let $(L(z))_{<0}$ mean the part of the Laurent series for $L(z)$ with negative powers of $z$,
$$
(L(z))_{<0}=(\bar\psi_1+2c_1\bar\psi_2+3c_2\bar\psi_3+\dots)\frac{1}{z}+(\bar\psi_2+2c_1\bar\psi_3+\dots)\frac{1}{z^2}+\dots=\sum\limits_{k=1}^{\infty}\frac{L_k}{z^{k}}.
$$
 Then, the functions $L(z)$ and $(L(z))_{< 0}$ are time-independent
for all $z\in S^1$.

It is easily seen that, passing from the cotangent vectors
$\bar\psi_k$ to the tangent vectors $\partial_k$, the
coefficients $L_k$ of $(L(z))_{<0}$ defined on the tangent bundle  $T\mathcal M^{(1,0)}$ are exactly the Kirillov vector fields $L_k$.  The corresponding fields $L_k$ in the covariant form are
conserved by the L\"owner-Kufarev ODE because $\dot L_k=\{L_k, H\}=0$. The above Poisson structure coincides with  that given by the Witt brackets introduced for $L_k$ previously. For finite-dimensional grades
this result was obtained in \cite{MPV}.

Let us formulate the result as a theorem.

\begin{theorem} Let the driving term $p(z,t)$ in the L\"owner-Kufarev ODE be from the Carath\'eodory class for almost all $t\geq 0$,   $C^{\infty}$ smooth in $\hat{U}$, and summable with respect to $t$. Then the Kirillov  fields in the covariant form are the conserved quantities for the Hamiltonian system
(\ref{sys1}--\ref{sys2}) generated by the L\"owner-Kufarev ODE.
\end{theorem}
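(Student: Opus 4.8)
The plan is to prove the conservation law by exhibiting a single generating function on $S^1$ whose principal (negative-power) Laurent coefficients are the covariant Kirillov fields $L_k$, and then to show that this function is constant along every trajectory of the Hamiltonian system (\ref{sys1})--(\ref{sys2}). The covariant conservation $\dot L_k=\{L_k,H\}=0$ will follow at once from the constancy of the whole generating function.

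First I would form the product $L(z):=f'(z,t)\bar\psi(z,t)$, where $f=e^tw$ is the position coordinate and $\bar\psi$ the conjugate momentum. The role of the hypotheses on $p$ (Carath\'eodory class, $C^\infty$ on $\hat U$, summable in $t$) is exactly to place us in the situation of the Lemma: $w$, and hence $f$ and $f'$, extend $C^\infty$-smoothly to the boundary circle, so that $L(z)$ is a well-defined smooth function on $S^1$ with a convergent Laurent expansion that may be differentiated coefficient-by-coefficient in $t$. Without this regularity the manipulation of the series on $S^1$ would not be licit.

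Next I would differentiate $L$ in $t$ along the flow. Writing $w=e^{-t}f$ and differentiating (\ref{sys1}) in $z$ (using $\partial_z p(w,t)=e^{-t}f'p'(w,t)$) gives
\[
\dot{f}'=f'\bigl[(1-p(w,t))-e^{-t}fp'(w,t)\bigr].
\]
Combining this with the momentum equation (\ref{sys2}) for $\dot{\bar\psi}$, the two bracketed factors coincide and cancel:
\[
\dot L=\dot{f}'\,\bar\psi+f'\dot{\bar\psi}
= f'\bar\psi\Bigl(\bigl[(1-p)-e^{-t}fp'\bigr]-\bigl[(1-p)-e^{-t}fp'\bigr]\Bigr)=0 .
\]
Hence $L(z)$ is time-independent for all $z\in S^1$, and therefore so is every coefficient of its Laurent series, in particular the coefficients $L_k$ of $(L(z))_{<0}$.

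It then remains to identify these conserved coefficients with the Kirillov fields in covariant form. The coefficient of $z^{-k}$ in $f'\bar\psi$ is $\sum_{j\ge 0}(j+1)c_j\bar\psi_{k+j}$ (with $c_0=1$); passing from the independent covectors $\bar\psi_n$ to the tangent vectors $\partial_n$ through the canonical Hermitian pairing for analytic functions reproduces precisely $L_k=\partial_k+\sum_{j\ge 1}(j+1)c_j\partial_{j+k}$, and conservation of $L_k$ is equivalent to $\{L_k,H\}=0$ for the canonical bracket. Since the induced bracket on the $L_k$ already satisfies the Witt relations established earlier, this closes the argument. I expect the genuine work to lie not in the $t$-derivative, which is a clean cancellation, but (i) in justifying the term-by-term calculus on $S^1$ from the Lemma, and (ii) in making the cotangent-to-tangent dualization precise, so that the symplectic pairing on $(f,\bar\psi)$ sends the conserved covariant coefficients to exactly the operators $L_k$ with their Witt bracket rather than to some rescaled or twisted variant.
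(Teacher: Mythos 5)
Your proposal is correct and follows essentially the same route as the paper: both form the generating function $L(z)=f'(z,t)\bar\psi(z,t)$, obtain $\dot L=0$ from the exact cancellation between the $z$-derivative of the position equation and the momentum equation, and then read off the conserved covariant Kirillov fields as the negative-power Laurent coefficients $L_k=\bar\psi_k+\sum_{j\ge1}(j+1)c_j\bar\psi_{k+j}$. The only difference is that you make explicit the regularity justification (via the Lemma) and the dualization step, which the paper leaves as ``easily seen.''
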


\begin{remark}
Another way to construct a Hamiltonian system could be based on the symplectic structure given by the K\"ahlerian form on $\Diff S^1/S^1$. However, there is no explicit expression for such  form in terms of  functions $f\in \Sb$. Moreover, there must be a Hamiltonian formulation
in which the L\"owner-Kufarev equation becomes an evolution equation. This remains an open problem.
\end{remark}

\begin{remark}
At a first glance the situation with an ODE with  a parameter is quite simple. Indeed, if we solve an equation of type $\dot{f}(t, e^{i\theta})=F(f(t, e^{i\theta}),t)$, then fixing~$\theta$ we have an integral  of motion $C=I(f(t,\cdot),t)=\const$. Then, releasing $\theta$, we have $C(e^{i \theta})=I(f(t,e^{i\theta}),t)$. Expanding $C(e^{i\theta})$ into the Fourier series, we obtain an infinite number of conserved quantities, but they do not manifest an
infinite number of degrees of freedom that govern the motion as in the field theory where the governing equations are PDE. In our case, we have not only
one trajectory fixing the initial condition but a pensil of trajectories because our equation has an infinite number of control parameters, the Taylor coefficients of the function $p(z,t)$, which form a bounded non-linear set of admissible controls. Therefore, we operate with sections of the tangent and co-tangent bundles  to the infinite dimensional manifold $\mathcal M$ instead of vector fields along one trajectory as in usual ODE.
\end{remark}

\begin{remark}
No linear combinations $L^*_k$ of $L_1,\dots,L_n,\dots$ allows us to reduce the system of $\{L_k\}$ to a new system of involutory $\{L_k^*\}$ in order
to claim the Liouville integrability of our system. Observe that the coefficients in these linear combinations must be constants to keep conservation laws. 
\end{remark}

\subsection{Construction of $L_0$ and $L_{-n}$}
Consider again the generating function $L(z)=f'(z,t)\bar\psi(z,t)$ and the `non-negative' part  $(L(z))_{\ge 0}$ of the Laurent series for $L(z)$,
 $$
(L(z))_{\ge 0}= (\bar\psi_0+2c_1\bar\psi_1+3c_2\bar\psi_2+\dots)+ (\bar\psi_{-1}+2c_1\bar\psi_0+3c_2\bar\psi_1+\dots)z+\dots
$$
$$
=\sum\limits_{k=0}^{\infty}\mathcal L_{-k}z^{k}.
$$
All $\mathcal L_{-k}$ are conserved by the construction. Define $\bar\psi_0^*=-\sum_{n=1}^{\infty}c_k\bar\psi_k$, and
\[
L_0=\mathcal L_0-(\bar\psi_0-\bar\psi^*_0).
\]
The operator $L_0$ acts on the class $\Sb$ by  $L_0[f](z)=zf'(z)-f(z)$.
 Next define  $L_{-1}=\mathcal L_{-1}-(\bar\psi_{-1}-\bar\psi_{-1}^*)-2c_1(\bar\psi_0-\bar\psi_0^*)$, where $\bar\psi_{-1}^*=0$. Then,
\[
L_{-1}[f](z)=f'(z)-2c_1 f(z)-1
\]
Finally, $$L_{-2}=\mathcal L_{-2}-(\bar\psi_{-2}-\bar\psi_{-2}^*)-2c_1(\bar\psi_{-1}-\bar\psi_{-1}^*)-3{c_2}(\bar\psi_0-\bar\psi_0^*).$$
We choose $\bar\psi_{-2}^*=(c_3-3c_1c_2+c_1^3)\bar\psi_1+\dots$, so that
\[
\bar\psi_{-2}^*[f](z)=\frac{1}{z}-\frac{1}{f(z)}-c_1-(c_2-c_1^2)f(z),
\]
and
\[
L_{-2}[f](z)=\frac{f'(z)}{z}-\frac{1}{f(z)}-3c_1+(c_1^2-4c_2)f(z).
\]
An important fact is that 
\[
L_{0}=c_1 \bar\psi_1+2c_2 \bar\psi_2+\dots,
\]
\[
L_{-1}=(3c_2-2c_1^2)\bar\psi_1+\dots,
\]
\[
L_{-2}=(5c_3-6c_1c2+2c_1^3) \bar\psi_1+\dots,
\]
are linear with respect to $ \bar\psi_k$, $k\geq 1$, and therefore, are sections of $T^*\mathcal M$, which are dual to Kirillov's vector fields. Equivalently, $$L_{0,-1,-2}[f](z)=\mbox{function}(c_1,c_2,\dots)z^2+\dots, \quad z^k=\frac{\partial f}{\partial c_{k-1}}.$$
All other co-vectors we construct by our Poisson brackets as 
\[
L_{-n}=\frac{1}{n-2}\{L_{-n+1},L_{-1}\}=\frac{1}{n-4}\{L_{-n+2},L_{-2}\}.
\]
The form of the Poisson brackets guarantees us that all $L_{-n}$ are linear with respect to $\bar\psi_1,\bar\psi_2,\dots$ and span
the anti-holomorphic part of the co-tangent bundle ${T^{(0,1)}}^*\mathcal M$.

Let us summarize the above in the following conclusion.
We considered a non-linear contour dynamics given by the L\"owner-Kufarev equation. It turned out to be underlined by an algebraic structure, namely, by the Witt algebra spanned by the Virasoro generators $L_n$, $n\in\mathbb Z$.
\begin{itemize}
\item[$\bullet$]   $L_n$, $n = 1,2,\dots$ are the holomorphic Virasoro generators. They span the holomorphic part of the complexiÞed tangent bundle over the space of univalent functions, smooth on the boundary. In the covariant formulation they are conserved by the L\"owner-Kufarev evolution. 

\item[$\bullet$] $L_0$ is the central element.

\item[$\bullet$]  $L_{-n}$, $n = 1,2,\dots$ are the antiholomorphic Virasoro generators. They  span the antiholomorphic part of the decomposition. They 
contain a conserved term and we give an iterative method to obtain them based on 
 the Poisson structure of the L\"owner-Kufarev evolution.
\end{itemize}

\subsection{L\"owner-Kufarev PDE}

 The L\"owner equation in partial derivatives is
$${ \dot{w}(\zeta,t)=\zeta w'(\zeta,t)p(\zeta,t)}, \quad \re p(\zeta,t)>0, \quad |\zeta|<1.$$
with some initial condition $w(z,0)=f_0(z)$.
Let us consider the one-parameter family of functions
$f(z,t)=e^{-t}w(z,t)=z(1+\sum_{n=1}^{\infty}c_n(t)z^n)$, $f(z,0)=f_0(z)$ as a $C^1$ path
in $\Sb$. At the initial point $f_0(z)$ we have that $T_{f_0}\Sb=T_{f_0}\mathcal M^{(1,0)}=T_{f_0}\mathcal M$. A path in the coefficient body $\mathcal M$ in the neighbourhood of $f_0$ is
 $(c_1(t),\dots, c_n(t),\dots)$ with the velocity vector
$\dot c_1\partial_1+\dots+\dot c_n\partial_n+\dots \in T_{f_0}\mathcal M$. 

Taking the Virasoro generators $\{L_k\}$, $k\geq 1$, as a basis in $T_{f_0}\mathcal M^{(1,0)}$ we wish the velocity vector written in this new basis to be
\begin{equation}
{ \dot c_1\partial_1+\dots+\dot c_n\partial_n+\dots=u_1L_1+\dots
u_n L_n+\dots}.\label{e1}
\end{equation}
We compare (\ref{e1})  with the L\"owner-Kufarev equation
\begin{equation} { \dot f=\dot c_1\partial_1+\dots+\dot
c_n\partial_n+\dots=zf'p(z,t)- f=L_0+u_1L_1+\dots u_n L_n+\dots},\label{e2}
\end{equation}
where $p(z,t)=1+u_1z+\dots+u_nz^n+\dots$, and $L_0f=zf'-f$. In view of  similarity between these two expressions \eqref{e1} and \eqref{e2}, we notice that 
\begin{itemize}
\item a new term $L_0$ appears in the L\"owner-Kufarev equation;
\item the function $p(z,t)$ with positive real part corresponds to subordination, whereas for generic trajectories it may have  real part of arbitrary sign. We call this an {\it alternate L\"owner-Kufarev evolution};
\item the vector $L_0$  corresponds exactly to the rotation: $$e^{i\varepsilon}f(e^{-i\varepsilon}z)=f(z)-i\varepsilon(zf'(z)-f(z))+o(\varepsilon).$$  
\end{itemize}

 Let us consider the set $\Sb_0$ of non-normalized smooth univalent functions
of the form $F(z,t)=a_0(t)z+a_1(t)z^2+\dots$, with a tangent vector $\dot a_0\partial_0+\dots+\dot a_n\partial_n+\dots$, where $\partial_k=\partial/\partial a_k$, $k=0,1,2, \dots$. Our aim is to define two different distributions  for the tangent bundle $T\Sb_0$,  that form a sub-bundle of co-dimension 1, which is the tangent bundle
  $T\Sb$. This will be realized by means of formulas
  \eqref{e1} and \eqref{e2}.   Notice that  $\partial_k F=z^{k+1}$. Setting $L_k(F):=z^{k+1}F'$ we get
\[
\dot F=\dot a_0\partial_0+\dots+\dot
a_n\partial_n+\dots=zf'p(z,t)=u_0L_0+u_1L_1+\dots u_n L_n+\dots,
\]
where $p(z,t)=u_0+u_1z+\dots+u_nz^n+\dots$. This alternate L\"owner-Kufarev equation represents
recalculation of the tangent vector in the new basis
\[
{ \dot a_0\partial_0+\dots+\dot a_n\partial_n+\dots=u_0L_0+\dots
u_n L_n+\dots},
\]
where $L_k=a_0\partial_k+2a_1\partial_{k+1}+\dots$.

Let us present the distributions. We start with $F\in \Sb_0$, then we define $f\in \Sb$. The necessary distribution is the map
$$\Sb_0\ni F\to T_f\Sb\hookrightarrow T_F\Sb_0.$$

The analytic form of the first distribution is the following factorization
 $f_1(z,t)=\frac{1}{a_0}F(z,t)=z+\frac{a_1}{a_0}z^2+\dots$, so that
\begin{equation}
 \dot f_1=zf_1'p(z,t)-\frac{\dot a_0}{a_0}f_1,\label{F1}
 \end{equation}
 where $u_0=\frac{\dot a_0}{a_0}$. Then we obtain $$ { \dot c_1\partial_1+\dots+\dot c_n\partial_n+\dots=\hat L_0+u_1\hat L_1+ \dots +u_n \hat L_n+\dots}$$
where $\hat L_0f_1=u_0(zf_1'-f_1)$, $\hat L_kf_1=z^{k+1}f_1'$, $c_k=\frac{a_k}{a_0}$,
$\partial_k=\frac{\partial}{\partial c_k}$.  In
particular, $a_0=e^t$ implies the L\"owner-Kufarev equation for arbitrary sign of $\re p$.

The analytic form of the second distribution becomes
 $f_2(z,t)=F(\frac{1}{a_0}z,t)=z+\frac{a_1}{a^2_0}z^2+\dots$, so that
\begin{equation}
 \dot f_2=zf_2'p(\frac{z}{a_0},t)- \frac{\dot a_0}{a_0} zf'_2,\label{F2}
 \end{equation}
 where again $u_0=\frac{\dot a_0}{a_0}$.
In the coefficient form we get
$$ { \dot c_1\partial_1+\dots+\dot c_n\partial_n+\dots=u_1\tilde L_1+ \dots +u_n \tilde L_n+\dots}$$
where  $\tilde L_kf_2=z^{k+1}f_2'$, $c_k=\frac{a_k}{a^{k+1}_0}$,
$\partial_k=\frac{\partial}{\partial c_k}$.

Observe that the equation  \eqref{F2} gives an identification of $T\mathcal M^{(1,0)}$ with  $T\mathcal M$.

Finally, let us make an explicit calculation of $\hat{L}_0$, which for $a_0=e^t$ we continue to denote by $L_0$. Using Kirilov's basis $L_1,L_2,\dots$ as a linear combination we write
$$L_0=\sum_{m=1}^{\infty}\Pi_mL_m.$$ The coefficients $\Pi_m$ are polynomials, which can be
obtained  using the following recurrent formulas
$$K_1=0,\quad K_m=-\sum_{j=1}^{m-1}j(m-j+1)c_{m-j}c_j,\quad
\Pi_m=mc_m+\sum_{j=1}^{m}K_{m-j+1}P_{j-1},$$ where $P_k$ are
polynomials
\begin{equation}\label{pol1}
P_0=1,\quad P_1=-2c_1,\quad P_2=4c^2_1-3c_2,\quad P_k=-\sum_{j=1}^{k}(j+1)c_jP_{k-j},
\end{equation} 

Let us summarize the above considerations in the following theorem.

\begin{theorem}
The L\"owner-Kufarev PDE \eqref{F1} gives the distribution for the tangent bundle $T\Sb_0$ of non-normalized smooth univalent functions  $\Sb_0$,  that forms a sub-bundle of co-dimension 1, which is  the tangent bundle
  $T\Sb$. 
  
The equation \eqref{F2} gives another distribution, and moreover,  
 it makes the explicit correspondence between the natural  complex structure 
of   $T\Sb$, as $\Sb$ embedded into $\mathbb C^{\mathbb N}$, and the complex structure of $T\mathcal M^{(1,0)}$ at each point $f\in \Sb$ defined by \eqref{CompStruct}.
\end{theorem}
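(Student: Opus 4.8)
The plan is to check directly that the analytic prescriptions \eqref{F1} and \eqref{F2} have the stated effect, by expanding each right-hand side in the Kirillov operators $L_k[f](z)=z^{k+1}f'(z)$ and comparing coefficients. First I record that both reductions keep us inside $\Sb$: since $f_1=F/a_0=z+(a_1/a_0)z^2+\dots$ and $f_2(z)=F(z/a_0)=z+(a_1/a_0^2)z^2+\dots$ are normalized, the assignments $F\mapsto f_1$ and $F\mapsto f_2$ are genuine projections $\Sb_0\to\Sb$, and \eqref{F1}, \eqref{F2} follow from $\dot F=zF'p$ by the chain rule applied to $F=a_0f_1$ and to $f_2(z)=F(z/a_0)$ respectively, with $u_0=\dot a_0/a_0=p(0)$.

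For the first distribution I substitute $p=u_0+u_1z+\dots$ into \eqref{F1} to get
\begin{equation*}
\dot f_1=u_0\,(zf_1'-f_1)+\sum_{k\ge1}u_k\,z^{k+1}f_1'=u_0L_0f_1+\sum_{k\ge1}u_kL_kf_1,
\end{equation*}
where $\{L_kf_1\}_{k\ge1}$ is Kirillov's basis of $T\Sb$. Upstairs on $\Sb_0$ this reads $\dot F=\sum_{k\ge0}u_kL_k(F)$ with $L_k(F)=z^{k+1}F'$; in the coordinates $\partial_k=\partial/\partial a_k$ one has $L_k=a_0\partial_k+2a_1\partial_{k+1}+\dots$, so $\{L_k(F)\}_{k\ge0}$ is a pointwise basis of $T\Sb_0$, triangular with nonvanishing diagonal $a_0$. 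The single generator $L_0(F)=zF'$ is the only one carrying a $\partial_{a_0}$-component; omitting it, the family $\{L_k(F)\}_{k\ge1}$ spans exactly $\ker(da_0)$, a codimension-one sub-bundle of $T\Sb_0$. This sub-bundle is tangent to the $\mathbb C^*$-slices $\{a_0=\mathrm{const}\}$, each a scaled copy $a_0\cdot\Sb$ of the normalized class, whence the distribution is canonically $T\Sb$. Thus \eqref{F1}, in which precisely the extra scaling generator $L_0$ is adjoined, proves the first assertion.

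For the second distribution I expand \eqref{F2}. Since $p(z/a_0)=u_0+\sum_{k\ge1}(u_k/a_0^k)z^k$, the constant term cancels against $-u_0zf_2'$ and \emph{no} $L_0$-term survives:
\begin{equation*}
\dot f_2=zf_2'\bigl(p(z/a_0)-u_0\bigr)=\sum_{k\ge1}\frac{u_k}{a_0^k}\,z^{k+1}f_2'=\sum_{k\ge1}\frac{u_k}{a_0^k}\,\tilde L_kf_2.
\end{equation*}
Matching the coefficient of $z^{m+1}$ against $\dot f_2=\sum_{m\ge1}\dot c_mz^{m+1}$ gives the transition $\dot c_m=\sum_{k=1}^m(u_k/a_0^k)(m-k+1)c_{m-k}$ (with $c_0=1$), which is lower-triangular with unit diagonal and therefore a complex-linear isomorphism between the velocity coordinates $u_k/a_0^k$ and the affine coordinates $\dot c_m$. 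The decisive observation is that in affine coordinates, $\partial_k=\partial/\partial c_k$, one has $\tilde L_k=\partial_k+\sum_{j\ge1}(j+1)c_j\partial_{k+j}$, which are exactly Kirillov's operators, i.e. the holomorphic basis of $T\mathcal M^{(1,0)}$ on which the structure $J$ of \eqref{CompStruct} acts as multiplication by $i$. Because the $\tilde L_k$ are holomorphic (complex-linear in the $c_j$) combinations of the $\partial_m$ and the transition is invertible, the holomorphic eigenspaces of $J$ and of the natural structure of $\Sb\hookrightarrow\mathbb C^{\mathbb N}$ sending $\partial_k$ to $i\partial_k$ coincide, so the two complex structures agree; this is the explicit correspondence claimed.

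I expect the main obstacle to be conceptual rather than computational: one must confirm that the complex structure $J$ constructed in Section 2 from the Schaeffer--Spencer action via \eqref{CompStruct} is genuinely the one acting diagonally on the $\tilde L_k$, that is, that the $\tilde L_k=z^{k+1}f_2'$ produced by \eqref{F2} coincide with the variations $\delta_{\nu_k}f$ spanning $T\mathcal M^{(1,0)}$. Granting this identification, established earlier in the excerpt, the agreement of the two structures reduces to the holomorphy of the transition coefficients, which is immediate. A secondary, purely technical point is to formulate ``distribution'' and ``codimension-one sub-bundle'' rigorously in the Fr\'echet category and to verify that the series defining the $L_k$ and the transition converge in the $C^\infty$ topology on $S^1$; this is guaranteed by Lemma 1 and the smoothness assumptions on $p$.
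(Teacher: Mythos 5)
Your proposal is correct and follows essentially the same route as the paper: the paper's argument for this theorem is precisely the preceding computation in which \eqref{F1} adjoins the scaling generator $L_0$ to Kirillov's basis $\{L_k\}_{k\ge1}$ of the codimension-one sub-bundle, while \eqref{F2} eliminates the $L_0$-term and exhibits the velocity in the basis $\tilde L_k=z^{k+1}f_2'$, already identified in Section 2 with the Schaeffer--Spencer variations $\delta_{\nu_k}f$ spanning $T\mathcal M^{(1,0)}$. Your added details (the $\ker(da_0)$ characterization and the triangular, unit-diagonal transition between $u_k/a_0^k$ and $\dot c_m$) only make explicit what the paper leaves implicit.
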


One of the reason to consider the alternate L\"owner-Kufarev PDE is the regularized canonical Brownian motion on smooth Jordan curves.
For all Sobolev metrics $H^{\frac{3}{2}+\varepsilon}$, the classical theory of stochastic flows allows to construct Brownian motions on $C^1$ diffeomorphism group of $S^1$. The case 3/ 2 is critical. Malliavin \cite{Malliavin} constructed the canonical Brownian motion on the Lie algebra $\Vect S^1$ for the Sobolev norm $H^{3/2}$. Another construction was proposed in \cite{Fang}. Airault and Ren \cite{AiraultRen} proved that the infinitesimal version
of the Brownian flow is H\"older continuous with any exponent $\beta<1$.

The regularized canonical Brownian motion on $\Diff S^1$ is a stochastic flow on $S^1$ associated to the It{\^{o}} stochastic differential equation
\[
dg^r_{x,t}=d\zeta^r_{x,t}(g^r_{x,t}),
\]
\[
\zeta^r_{x,t}(\theta)=\sum_{n=1}^{\infty}\frac{r^{n}}{\sqrt{n^3-n}}(x_{2n}(t)\cos n\theta-x_{2n-1}(t)\sin n\theta),
\]
where $\{x_k\}$ is a sequence of independent real-valued Brownian motions and $r\in (0,1)$ and the series for $\zeta^r_{x,t}(\theta)$ is a Gaussian trigonometric series. Kunita's theory of stochastic flows asserts that
the mapping $\theta\to g^r_{x,t}(\theta)$ is a $C^{\infty}$ diffeomorphism and the limit $\lim\limits_{r\to 1^{-}}g^r_{x,t}=g_{x,t}$ exists uniformly
in $\theta$. The random homeomorphism $g_{x,t}$ is called {\it canonical Brownian motion} on $\Diff S^1$, see \cite{AiraultRen, Fang, Malliavin, RenZhang}. It was shown in \cite{AiraultRen, Fang}, that this random homeomorphism is H\"older continuous.

The canonical Brownian motion can be defined not only on $\Diff S^1$, but  also on the space of $C^{\infty}$-smooth Jordan curves by conformal welding. This leads to dynamics of random loops which are not subordinated.

\section{Elliptic operators over the coefficient body}

The Kirillov first order differential operators $L_k$ generate the elliptic operator $\sum |L_k|^2$. In this section we construct the geodesic equation and find geodesics with constant velocity coordinates in the field of this operator. In particular, we shall prove that the norm of the driving function in the L\"owner-Kufarev theory gives the minimal energy of the motion
in this field.

\subsection{Dynamics within the coefficient body}
Let us  recall the geometry of the coefficient body  $\mathcal M_n$ for finite $n$. The affine coordinates are introduced by projecting
$$
\mathcal M\ni f=z\Big(1+\sum\limits_{k=1}^{\infty}c_kz^k\Big)\mapsto (c_1,\ldots, c_n)\in\mathcal M_n.
$$
The manifold $\mathcal M_n$ was studied actively in the middle of the last century, see e.g., \cite{Babenko, Schaeffer}.
We compile some important properties of $\mathcal M_n$ below:
\begin{itemize}

\item[(i)] $\mathcal M_n$ is homeomorphic to a $(2n-2)$-dimensional ball and its
boundary $\partial \mathcal M_n$ is homeomorphic to a $(2n-3)$-dimensional
sphere;

\item[(ii)] every point $x\in \partial \mathcal M_n$ corresponds to exactly one function $f\in
\Sb$ which is called a {\it boundary function} for $\mathcal M_n$;

\item[(iii)] boundary functions map the unit disk $U$ onto the complex plane
$\mathbb C$ minus piecewise analytic Jordan arcs forming a tree with
a root at infinity and having at most $n$ tips,

\item[(iv)] with the exception for a set of smaller dimension,
at every point $x\in \partial \mathcal M_n$ there exists a normal vector
satisfying the Lipschitz condition;

\item[(v)] there exists a connected open set $X_1$ on $\partial \mathcal M_n$,
such that the boundary $\partial \mathcal M_n$ is an analytic hypersurface at
every point of $X_1$. The points of $\partial \mathcal M_n$ corresponding to
the functions that give the extremum to a linear functional belong
to the closure of $X_1$.

\end{itemize}

Properties (ii) and (iii) imply that the functions from $\Sb$ deliver interior points
of $\mathcal M_n$. The Kirillov operators $L_j$ restricted onto $\mathcal M_n$
give truncated vector fields
$$
L_j=\partial_j+\sum\limits_{k=1}^{n-j}(k+1)c_k\partial_{j+k},
$$
which we, if it causes no confusion, continue denoting by $L_j$ in this section.
In~\cite{MPV} based on the L\"owner-Kufarev representation, we showed that these $L_j$ can be obtained from a partially integrable Hamiltonian system for the coefficients in which the first integrals coincide with $L_j$.

Let $c(t)=\big(c_1(t),\ldots,c_{n}(t)\big)$ be a smooth trajectory in $\mathcal M_n$; that is a $C^1$ map $c:[0,1]\to \mathcal M_n$. Then the velocity vector $\dot c(t)$ written in the affine basis as $\dot c(t)=\dot c_1(t)\p_1+\ldots+\dot c_{n}(t)\p_{n}$ can be also represented in the basis of vector fields $L_1,\ldots,L_{n}$ (compare with \eqref{F2}) as \begin{eqnarray}\label{eq4}
\dot c(t) & =\dot c_1(t)\p_1+\ldots+\dot c_{n}(t)\p_{n}\\                                             & = u_1L_1+u_2L_3+\ldots+u_{n}L_{n}, \nonumber                                                                       \end{eqnarray}
where the coefficients $u_k$ can be written in the recurrent form  as 
\begin{equation}\label{eq3}
u_1=\dot c_1,\qquad u_k=\dot c_k-\sum_{j=1}^{k-1}(j+1)\dot c_ju_{k-j}.
\end{equation} Expressing  $u_k$ in terms of $c_k$ and $\dot c_k$, we get
\begin{equation}\label{eq8}
u_k=\dot c_k+\sum_{j=1}^{k-1}P_{j}\dot c_{k-j}.
\end{equation}

One may notice that these polynomials are the first coefficients of the holomorphic function $1/f'(z)$, where $f\in \Sb$. In the infinite dimensional case this follows from the L\"owner-Kufarev equation
 \eqref{F2} with $a_0=e^t$. Kirillov's fields $L_k$ act over these polynomials as  
 $$
 L_kP_n=(n-2k-1)P_{n-k}\quad n\geq k\quad\text{and}\quad L_kP_n=0 \quad n< k.
 $$

\begin{proposition}
 We define
\begin{eqnarray}\label{eq2}
\omega_1 & = & dc_1,\nonumber \\
\omega_2 & = & dc_2-2c_1\omega_1,\nonumber\\
\ldots & \ldots & \ldots\ldots\ldots\ldots\ldots,\nonumber\\
\omega_n & = & dc_n-\sum_{j=1}^{n-1}(j+1)c_j\omega_{n-j}.
\end{eqnarray}
Then, $\{\omega_1,\dots,\omega_n\}$ is a conjugate to $\{L_1,\ldots,L_n\}$ basis of one-forms. Namely,
$$
\omega_n(L_n)=1,\quad \omega_n(L_k)=0\ \ \text{if}\ \ k\neq n.
$$ 
\end{proposition}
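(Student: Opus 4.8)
The plan is to establish the duality relation $\omega_n(L_k)=\delta_{nk}$ by induction on $n$, with the entire argument resting on a single elementary computation: the pairing $dc_j(L_k)$. Since $dc_j(L_k)$ is nothing but the coefficient of $\partial_j$ in the truncated field $L_k=\partial_k+\sum_{m=1}^{n-k}(m+1)c_m\partial_{k+m}$, reading it off directly gives
\begin{equation*}
dc_j(L_k)=\begin{cases}1, & j=k,\\ (j-k+1)c_{j-k}, & j>k,\\ 0, & j<k.\end{cases}
\end{equation*}
This is the only analytic input; everything that follows is bookkeeping of index ranges.

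For the base case $n=1$ one has $\omega_1=dc_1$, and the display above yields $\omega_1(L_k)=dc_1(L_k)=\delta_{1k}$ at once. For the inductive step I assume $\omega_m(L_k)=\delta_{mk}$ for all $m<n$ and all $k$, and insert this into the recursive definition of $\omega_n$:
\begin{equation*}
\omega_n(L_k)=dc_n(L_k)-\sum_{j=1}^{n-1}(j+1)c_j\,\omega_{n-j}(L_k)=dc_n(L_k)-\sum_{j=1}^{n-1}(j+1)c_j\,\delta_{n-j,k}.
\end{equation*}
In the remaining sum only the index $j=n-k$ can survive, and it lies in the admissible range $1\le j\le n-1$ exactly when $1\le k\le n-1$.

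It then remains to match the two terms case by case. When $k=n$ the would-be surviving index is $j=0$, which is excluded from the sum, so the sum vanishes while $dc_n(L_n)=1$, giving $\omega_n(L_n)=1$. When $1\le k\le n-1$ the sum equals $(n-k+1)c_{n-k}$, and since $n>k$ the first display also gives $dc_n(L_k)=(n-k+1)c_{n-k}$, so the two cancel and $\omega_n(L_k)=0$; this closes the induction. I do not expect a genuine obstacle here, since the statement is really a triangular-duality fact: expressing $\{L_k\}$ in the basis $\{\partial_j\}$ and expressing $\{\omega_n\}$ in $\{dc_j\}$ both use the identical unitriangular coefficient matrix $(i-j+1)c_{i-j}$, so the two expansions are mutually inverse and $\omega_n(L_k)=\delta_{nk}$ follows formally. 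The one point that genuinely requires care, and where a hasty argument would slip, is the interplay of index ranges in the final step: one must verify that the single term the Kronecker delta extracts from the sum is precisely the off-diagonal term produced by $dc_n(L_k)$, and that for $k=n$ this term is absent from the sum (because $j=0$ is out of range) yet present in $dc_n(L_n)$ as the diagonal entry $1$.
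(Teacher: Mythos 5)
Your proof is correct and follows essentially the same route as the paper's: induction on the recursion defining $\omega_n$, with the key cancellation $dc_n(L_k)=(n-k+1)c_{n-k}$ against the single surviving term $(n-k+1)c_{n-k}\,\omega_k(L_k)$ of the sum; your version is in fact tidier and more complete, since the paper only writes the induction out for $L_1$ and asserts that the same works for general $k<n$, while it treats $k\ge n$ directly. The only loose end is that your final case analysis stops at $k\le n$, but the remaining case $k>n$ is immediate from your own display ($dc_n(L_k)=0$ for $n<k$ and the surviving index $j=n-k$ is negative, so the sum is empty), so nothing of substance is missing.
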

\begin{proof}
If $k>n$, then the vector fields $L_k$ do not contain $\p_n$. Since the form $\omega_n$ depends only on $dc_j$ with $j<n$, then $$\omega_n(L_k)=\p_n(L_k)-\sum_{j=1}^{n-1}(j+1)c_j\omega_{n-j}(L_k)=0\ \ \text{for}\ \ k>n>n-j.$$ If $n=k$, then $$\omega_n(L_n)=\p_n(L_n)-\sum_{j=1}^{n-1}(j+1)c_j\omega_{n-j}(L_n)=1+0\ \ \text{for}\ \ n>n-j.$$ To prove the case $k<n$ we apply the induction. Let us show for $L_1$. We have $$\omega_2(L_1)=dc_2(L_1)-2c_1(L_1)=2c_1-2c_1=0.$$ We suppose that $\omega_n(L_1)=0$. Then $$\omega_{n+1}(L_1)=dc_{n+1}(L_1)-\sum_{j=1}^{n}(j+1)c_j\omega_{n+1-j}(L_1)=(n+1)c_{n}-(n+1)c_n\omega_1(L_1)=0.$$ The same arguments work for $\omega_n(L_k)$ with $k<n$.
\end{proof}

In the affine basis the forms $\omega_k$  can be written making use of the
polynomials $P_n$. We observe that one-forms $\omega_k$ are defined
in a similar way as the coordinates $u_k$ with respect to
the Kirillov vector fields $L_k$. Thus, if we develop the recurrent
relations~\eqref{eq2} and collect the terms with $dc_n$ we get
$$\omega_k=dc_k+\sum\limits_{j=1}^{k-1}P_jdc_{k-j}.\quad
k=1,\ldots,n.$$

By the duality of tangent and co-tangent bundles the information about the motion is encoded by these one-forms.

\subsection{Hamiltonian equations}
There exists an Hermitian form on $T\mathcal M_n$, such that the system
$\{L_1,\ldots,L_n\}$ is orthonormal with respect to this form. The
operator $L=\sum |L_k|^2$ is  elliptic, and  we write the
Hamiltonian function $H(c,\bar c, \psi, \bar\psi)$ defined on the
co-tangent bundle, corresponding to the operator $L$ as $H(c,\bar c,
\psi, \bar\psi)=\sum_{k=1}^{n}|l_k|^2$, where
$$l_k=\bar \psi_k+\sum_{j=1}^{n-k}(j+1)c_j\bar \psi_{k+j}.$$ The
corresponding Hamiltonian system admits the form \begin{eqnarray*}
\dot c_1 & = & \frac{\p H}{\p\bar\psi_1}= \bar l_1 \\
\ldots & = & \ldots\ldots\ldots\ldots \\
\dot c_n & = & \frac{\p H}{\p\bar\psi_n}=\bar l_n+\sum_{j=1}^{n-1}(j+1) c_j\bar l_{n-j}\\
\dot{\bar\psi}_p & = & -\frac{\p H}{\p c_p}=-(p+1)\sum_{k=1}^{n-p} l_k\bar \psi_{k+p}\\
\ldots & = & \ldots\ldots\ldots\ldots \\
\dot{\bar\psi}_n & = & -\frac{\p H}{\p c_n}=0.
\end{eqnarray*} Let us observe that
\begin{equation}\label{eq6}\dot l_k=\sum_{j=1}^{n-k}(j-k)\bar l_jl_{j+k}.\end{equation}
Expressing $\bar l_k$ from the first $n$ Hamiltonian equations we get
\begin{equation}\label{eq7}\bar l_k=\dot c_k+\sum_{j=1}^{k-1}P_{j}\dot c_{k-j},\quad k=1,\ldots,n.\end{equation} We can decouple the Hamiltonian system making use of~\eqref{eq6} and~\eqref{eq7} which leads us to the following non-linear differential equations of the second order
$$\ddot c_k=\dot{\bar l}_k+\sum_{j=1}^{k-1}(j+1)c_j\dot{\bar  l}_{k-l}+\sum_{j=1}^{k-1}(j+1)
\dot c_j\bar l_{k-l},$$ where $\dot{ l}_k$ are expressed in terms of the
product of $\bar l_jl_{j+k}$ by~\eqref{eq6}, and the last products depend on
$P_j$, $\bar P_j$ and $\dot c$, $\dot{\bar  c}_j$ for the
corresponding indices $j$ by \eqref{eq7}. For example,
$$
\ddot c_1  = \dot{\bar  l}_1=\sum_{j=1}^{n-1}(j-1)\Big(\dot
c_j+\sum_{p=1}^{j-1}P_p \dot c_{j-p}\Big)\overline{\Big(\dot
c_{j+1}+\sum_{q=1}^{j}P_q\dot c_{j+1-q}\Big)}.
$$

Comparing~\eqref{eq7} and~\eqref{eq8}, we conclude that $\bar
l_k=u_k$ and $u_k$ satisfy the differential equations
\begin{equation}\label{eq9}\dot u_k=\sum_{j=1}^{n-k}(j-k)\bar u_ju_{j+k},\end{equation} 
on the solution of the Hamiltonian system.
Observe that any solution of~\eqref{eq9} has a velocity vector of constant length. It is easy to see from the following system
\begin{eqnarray}\label{skewsym}
\bar u_1\dot u_1 & = & 0 \bar u_1\bar u_1u_2+\bar u_1\bar u_2u_3+2\bar u_1\bar u_3u_4+3\bar u_1\bar u_4u_5+4\bar u_1\bar u_5u_6+\ldots,\nonumber\\
\bar u_2\dot u_2 & = & -1 \bar u_1\bar u_2u_3+ 0 \bar u_2\bar u_2u_4+1\bar u_2\bar u_3u_5+2\bar u_2\bar u_4u_6+\ldots,\nonumber\\
\bar u_3\dot u_3 & = & -2 \bar u_1\bar u_3u_4-1 \bar u_2\bar u_3u_5+0\bar u_3\bar u_3u_6+\ldots,\\
\bar u_4\dot u_4 & = & -3 \bar u_1\bar u_4u_5-2 \bar u_2\bar u_4u_6+\ldots,\nonumber\\
\bar u_5\dot u_5 & = & -4 \bar u_1\bar u_5u_6+\ldots,\nonumber\\
\bar u_6\dot u_6 & = & \ldots\nonumber
\end{eqnarray} Then, $$\frac{d|u|^2}{dt}=2\sum_{k=1}^{n}(\bar u_k\dot u_k+u_k\dot{\bar u}_k)=0,$$ for any $n$, thanks to the cut form of our vector fields and the skew symmetry of \eqref{skewsym}. The simplest solution may be deduced for constant driving terms  $u_k$, $k=1,\ldots,n$. The Hamiltonian system immediately gives the geodesic
\begin{eqnarray*}
c_1 & = & \bar u_1(0)s+c_1(0),\\
c_2 & = & \bar u_1^2(0)s^2+\bar u_2(0)s+c_2(0),\\
c_3 & = & 3\bar u_1(0)\big(\bar u_1^2(0)\frac{s^3}{3}+\bar u_2(0)\frac{s^2}{2}+c_2(0)\big)+2\bar u_2(0)
\big(\bar u_1(0)\frac{s^2}{2}+c_1(0)s\big)+\bar u_3(0)s+c_2(0),\\
\ldots & = & \ldots\ldots\ldots\ldots\ldots\ldots
\end{eqnarray*}
In general, $c_n$ becomes a polynomial of order $n$ with coefficients
that depend on the initial data $c(0)$ and on the initial velocities
$\bar u(0)$.

The Lagrangian $\mathcal L$ corresponding to the Hamiltonian function $H$  can
be defined by the Legendre transform as $$\mathcal L=(\dot c,\bar
\psi)-H=\sum_{k=1}^{n}\Big(\bar l_k\bar \psi_k+\bar
\psi_k\sum_{j=1}^{k-1}(j+1)c_j\bar
l_{k-j}\Big)-\frac{1}{2}\sum_{k=1}^{n}|l_k|^2.$$ Taking into account that 
\[
\bar \psi_k\dot c_k =\sum\limits_{j=1}^{k-1} (j+1)c_j\bar{\psi}_k\bar{l}_{k-j}+\bar{\psi}_k\bar{l}_k.
\]
Summing up over $k$, we obtain $(\dot c,\bar
\psi)=\sum_{k=1}^{n}l_k\bar l_k=\sum_{k=1}^{n}\bar u_ku_k$, that
gives us
$$\mathcal L(c,\dot c)=\frac{1}{2}\sum_{k=1}^{n}|u_k|^2.$$ 
All these considerations can be generalized for $n\to\infty$.
Thus, we conclude that the coefficients of the function $p(z,t)$ in the L\"owner-Kufarev PDE play the
role of generalized moments for the dynamics in $\mathcal M_n$ and $\mathcal M$ with respect to the Kirillov basis on the tangent bundle. Moreover,
the $L^2$-norm of the function $p$ on the circle $S^1$ is the energy of such motion.

\section{SLE and CFT}

In this section we briefly review for completeness the connections between conformal field theory (CFT)  and  Schramm-L\"owner evolution (SLE) following, e.g., \cite{BB, FriedrichWerner}). SLE (being, e.g., a continuous limit of CFT's archetypical Ising model at its critical point) gives an approach to CFT which emphasizes CFT's roots in statistical physics. 

SLE$_{\varkappa}$ is a $\varkappa$-parameter family of covariant processes describing the evolution of random sets called the SLE$_{\varkappa}$ hulls. For different values of $\varkappa$ these sets can be either a simple fractal curve $\varkappa\in [0,4]$, or a self-touching curve $\varkappa\in (4,8)$, or a space filling Peano curve $\varkappa\geq 8$. At this step we deal with the chordal version of SLE. The complement to a SLE$_{\varkappa}$ hull in the upper half-plane $\mathbb H$ is a simply connected domain that is mapped conformally onto $\mathbb H$ by a holomorphic function $g(z,t)$ satisfying the equation
\begin{equation}\label{SLE}
\frac{dg}{dt}=\frac{2}{g(z,t)-\xi_t}, \quad g(z,0)=z,
\end{equation}
where $\xi_t=\sqrt{\varkappa}B_t$, and $B_t$ is a normalized Brownian motion with the diffusion constant $\varkappa$. The function $g(z,t)$ is expanded as $\displaystyle g(z,t)=z+\frac{2t}{z}+\dots$. The equation \eqref{SLE} is called the Schramm-L\"owner equation and was studied first in \cite{LSW1}--\cite{LSW3}, see also \cite{RohdeSchramm01} for basic properties of SLE.
Special values of $\varkappa$ correspond to interesting special cases of SLE, for example $\varkappa=2$ corresponds to the loop-erasing random walk and the uniform spanning tree, $\varkappa=4$ corresponds to the harmonic explorer and the Gaussian free field. Observe, that the equation \eqref{SLE} is not a stochastic differential equation (SDE). To rewrite it in a stochastic way (following \cite{BB}, \cite{FriedrichWerner}) let us set a function $k_t(z)=g(z,t)-\xi_t$, where $k_t(z)$ satisfies already the SDE
\[
dk_t(z)=\frac{2}{k_t(z)}dt-d\xi_t.
\]
For a function $F(z)$ defined in the upper half-plane one can derive the It\^o differential
\begin{equation}\label{Ito}
dF(k_t)=-d\xi_t L_{-1}F(k_t)+dt (\frac{\varkappa}{2}L_{-1}^2-2L_{-2})F(k_t), 
\end{equation}
with the operators $L_{-1}=-\frac{d}{d z}$ and $L_{-2}=-\frac{1}{z}\frac{d}{d z}$. These operators  are the first two Virasoro generators in the `negative' part of the Witt algebra spanned by the operators $-z^{n+1}\frac{d}{d z}$ acting on the appropriate representation space. All other generators can be obtained by the commutation relation
\[
[L_m,L_n]=(n-m)L_{n+m}.
\]For any state $|\psi\rangle$, the state $L_{-1}|\psi\rangle$ measures the diffusion of $|\psi\rangle$ under SLE, and
$(\frac{\varkappa}{2}L_{-1}^2-2L_{-2})|\psi\rangle$ measures the drift. The states of interest are drift-less, i.e., the second term in \eqref{Ito} vanishes. Such states are annihilated by $\frac{\varkappa}{2}L_{-1}^2-2L_{-2}$, which is true if we choose the state $|\psi\rangle$ as the highest weight vector  in the highest weight representation of the Virasoro algebra  with the central charge $c$ and the conformal weight $h$ given by
\[
c=\frac{(6-\varkappa)(3\kappa-8)}{2\varkappa},\quad h=\frac{6-\varkappa}{2\varkappa},
\]
and the operators $L_{-1}$ and $L_{-2}$ are  taken in the corresponding representation. It was obtained in  \cite{BB} and \cite{FriedrichWerner}, that $F(k_t)$ is a martingale if and only if
 $(\frac{\varkappa}{2}L_{-1}^2-2L_{-2})F(k_t)=0$.
We define a CFT with a boundary in $\mathbb H$ such that the boundary condition is changed by a boundary operator. The random curve in $\mathbb H$ defined by SLE is growing so that
it has states of one type to the left and of the other type to the right (the simplest way to view this is the lattice Ising model with the states defined as spin positions up or down). The mapping $g$ satisfying \eqref{SLE} `unzips' the boundary. The primary operator that induces the boundary change
with the conformal weight $h$ is drift-less, and therefore, its expectation value does not change in time under the boundary unzipping. Hence all correlators computing with this operator remain invariant. Analogous considerations one may provide for the `radial' version of SLE in the unit disk, slightly modifying the above statements.

\end{document}